\setlist[itemize]{leftmargin=*}
\newcommand{\set}[1]{\{#1\}} 
\newcommand{\midd}{\mathrel{:}}
\newcommand{\commentout}[1]{}
 \newlength{\wordlength}
\newcommand\footnoteref[1]{\protected@xdef\@thefnmark{\ref{#1}}\@footnotemark}
\newtheorem{theorem}{Theorem}
\newcommand{\MC}{\mathit{MC}}
\def\MC{\mathit{MC}}
\newcommand{\Nod}{\mathit{Nod}}
\newcommand{\dist}{\mathit{dist}}
\newcommand{\dists}{\mathit{dists}}
\begin{document}
\title{A Game-Theoretic Algorithm for Link Prediction}  % put your title here!
\author{Mateusz Tarkowski, University of Oxford\\
Tomasz P. Michalak, University of Warsaw\\
Michael Wooldridge, University of Oxford}

\maketitle

\begin{abstract}
Predicting edges in networks is a key problem in social network analysis and involves reasoning about the relationships between nodes based on the structural properties of a network. In particular, link prediction can be used to analyse how a network will develop or---given incomplete information about relationships---to discover ``missing'' links. Our approach to this problem is rooted in cooperative game theory, where we propose a new, quasi-local approach (i.e., one which considers nodes within some radius $k$) that combines generalised group closeness centrality and semivalue interaction indices. We develop fast algorithms for computing our measure and evaluate it on a number of real-world networks, where it outperforms a selection of other state-of-the-art methods from the literature. Importantly, choosing the optimal radius $k$ for quasi-local methods is difficult, and there is no assurance that the choice is optimal. Additionally, when compared to other quasi-local methods, ours achieves very good results even when given a suboptimal radius $k$ as a parameter.
\end{abstract}

\section{Introduction}
In this paper we are concerned with link prediction---an important research problem in social network analysis \cite{barbieri2014,liben2007,Lu2011}. The aim is to predict between which pairs of nodes unknown links exist or should form based on the known structural characteristics of the network. Applications include biological networks, where some of the network is known, but checking whether other links actually exists is very costly (e.g., food web, protein-protein and metabolic networks \cite{Lu2011}); social networks, where certain information about links is impossible to attain or verify (e.g., covert networks \cite{waniek2019hide}); and identifying potentially fruitful collaboration in organisations \cite{liben2007}.

Predicting whether links exist or will form is strongly associated with the concept of node similarity \cite{Lu2011}. If two nodes are similar, then there is a greater chance that they are connected. The three most common approaches to computing node similarity take into account either \textit{local}, \textit{quasi-local} or \textit{global} topological information about nodes. In general, global methods produce better results than local ones, but are more computationally involved. Furthermore, quasi-local methods, which only consider the network within a certain radius $k$ around the pair of nodes, tend to do better than global ones \cite{Lu2011} when given an optimal radius. This is because, it is unlikely that features of the network that are far away from a pair of nodes (that are taken into account by global methods) will actually impact the chance of those two nodes being connected.

Recently, Szczepanski et al. \cite{Szczepanski:2015b} proposed a new method to tackle the link prediction problem based on group $k$-degree group centrality and the concept of the game-theoretic interaction index. In particular, $k$-degree group centrality is a valuation of groups of nodes according to the number of nodes that are at a distance $k$ or closer to the group \cite{Michalak:et:al:2013}. In turn, the interaction index can be interpreted as a measure of the synergy \cite{alshebli2019measure} or similarity of players in a cooperative situation. When applied to $k$-degree group centrality, the interaction index becomes an interesting measure of the similarity of the topological placement of nodes. In particular, two nodes produce negative synergy according to $k$-degree group centrality whenever they have common neighbours. The interaction index is used to measure such negative synergies between
any pair of nodes in the context of all the possible groups of nodes that these two nodes belong to. Considering all these groups allows the measure to prioritise the impact of unique common neighbours. The intuition behind considering all groups is as follows: if most of the nodes in a network neighbour a certain node, then this does not necessarily mean that they are similar; on the other hand, if only two nodes neighbour a certain node, then this is a unique characteristic shared only by these two nodes and suggests that they are similar. 
Szczepanski et al. show that their game-theoretic approach produces a very competitive link prediction method when compared to a selection of quasi-local measures.  As for computational considerations, the authors develop a general algorithm for computing interaction indices of $k$-degree centrality that runs in $O(|V|^4)$ and a more specific one that runs in $O(|V|^3)$ time. 

In what follows, we propose an alternative game-theoretic method to tackle the link prediction problem. Our method builds upon the following intuition: while the $k$-degree centrality used by Szczepanski et al.~\cite{Szczepanski:2015b} postulates that all nodes within a distance $k$ of a pair of nodes impact their similarity \textit{equally}, we postulate that \textit{those that are farther away should be considered less important than those that are closer from the pair}. In order to develop a method based on this intuition, we use the concept of generalised group closeness centrality, which has precisely this property: \textit{the magnitude of the impact of a node on the value of a group reduces as the distance from this group increases}.\footnote{See the work by Skibski and Sosnowska \cite{skibski2018axioms} for an overview of different centrality measures that focus on the distances between nodes.} As compared to $k$-degree centrality, this allows us to reduce the impact of far-away nodes on synergy. As we are interested in developing a quasi-local measure, similarly to $k$-degree centrality, we introduce a restriction on the generalised group closeness centrality that only nodes inside of a radius $k$ around a given group impact the value of this group value.

Interestingly, despite the fact that our method is more general that the one developed by Szczepanski et al.~\cite{Szczepanski:2015b}, our algorithm for computing it is less demanding computationally. In particular, we compute our generalised closeness semivalue interaction index with a complexity of $O(V_k^2|V|^2)$, where $V_k$ is the average number of nodes within a distance $k$ of any node. We also develop an algorithm to compute the generalised closeness Shapley value interaction index that runs in $O(V_k^2|V| + |V|^2)$ time. This is an interesting result, since computing the interaction index given an arbitrary cooperative game is difficult ($\#$-P complete, like the Shapley value \cite{Chalkiadakis:et:al:2011,Deng:Papdimitriou94}). We should also highlight the generality of our algorithm stemming from the fact that we allow for any function of distance, $f$, to be used with our measure. For $f(d) = 1$ whenever $d < k$, our measure is equivalent to the $k$-degree interaction index.

We evaluate our approach by considering $11$ real-life networks on which we run a series of experiments. We find that---in most cases---our approach produces equal or better results than the state-of-the-art quasi-local measures in the literature. Among others, it outperforms the game-theoretic method proposed by Szczepanski et al.~\cite{Szczepanski:2015b}. 

Furthermore, we consider an aspect of quasi-local measures that has thus-far been overlooked in the literature. In particular, most authors present the result of their measure given an optimal parameter $k$ \cite{Szczepanski:2016b,Szczepanski:2015b,Lu2011}. However, the optimal value of this parameter is unknown and it must be estimated. Therefore, we study how the measures perform given a selection of values for the parameter $k$, and not only the optimal value of this parameter, since there is always a chance that the estimated value of $k$ is not optimal.

Two cases can be distinguished. If a value for $k$ is chosen that is smaller than the optimal value of this parameter, then this results in a computational/accuracy trade-off. The measure can usually be computed faster, but it could perform better by having more information about the network (i.e., a larger value for $k$). For large networks, it may simply take too long to compute a quasi-local measure with an optimal $k$, and such a trade-off is necessary. In the worst case, $k=1$ may be chosen, and the quasi-local measure becomes local. On the other extreme, if $k$ is equal to the longest shortest path in the network, then it can be said that the quasi-local measure becomes global.

What happens, then, if  the value of $k$ is larger than the optimal value? In our study, we show that such a $k$ can result in a significant reduction of the quality of the similarity ranking for quasi-local measures. We find, however, that our proposed measure is largely resistant to this. For example, in one network that we study---a Football network \cite{Girvan:2002,pajek}---the precision of the ranking produced by the measure due to Szczepanski et al.~\cite{Szczepanski:2015b} decreases by approximately $37\%$ when the value of $k$ is $3$ instead of the optimal value $1$, whereas the precision of our measure decreases by less than $1\%$. Interestingly, we also note that there are cases where the quality of other measures is reduced given a larger $k$, while the quality of our measure improves. This means that whereas other measures pick up more noise given more information about the network, our measure is still able to gather more insight in order to improve the ranking of potential edges.

%%%%%%%%%%%%%%%%%%%%%%%%%%%%%%%%%%%%%%%%%%%%%%%%%%%%%%%%
\section{Preliminaries}
In this section we introduce the key concepts required for the understanding of the paper.

\subsection{Graph-Theoretic Concepts}
A \emph{network} is a directed \emph{weighted graph}~$G=(V, E,\omega)$, where~$V$ is a set of nodes,~$E$ is a set of edges, i.e., unordered pairs~$(v,u)$ of nodes in~$V$ with $v\neq u$, and $\omega: E \rightarrow \mathbb{R}^+$ is a weight function from edges to the positive real numbers. A graph is \textit{unweighted} if $\omega(e)=1$ for all $e\in E$.
%A \textit{path}~$\pi_{st}$ from a node~$s$ to a node~$t$ is an ordered set $(v_0, v_1, \ldots, v_k)$ such that~$v_0 = s$ and~$v_k = t$ and $(v_i, v_{i+1}) \in E$ for all~$i$ with $1\le i<k$.
We denote the \emph{neighbours} of a node~$v$ by $E(v)=\set{u\midd (v,u)\in E}$ and the neighbours of a subset~$C$ of nodes by $E(C)=\bigcup_{v\in C}E(v)\setminus C$.
We refer to the \emph{degree} of a node~$v$ by $\mathit{deg}(v) = |E(v)|$. We define the distance from a node~$s$ to a node~$t$ as the length of the shortest path from $s$ to $t$ and denote it by~$\mathit{dist}(s, t)$, and we define the distance between a node~$v$ and a subset of nodes~$C \subseteq V$ as $\mathit{dist}(C, v) = \min_{u \in C} \textit{dist}(u, v)$. 

The \textit{Generalised Group Closeness Centrality} \cite{Freeman:1979,Everett:Borgatti:1999,Michalak:et:al:2013b} of a group of nodes~$S$ in a graph $G$ is defined as:
%the sum of some function, $f$, of the distances from~$S$, to any node outside of~$S$. Formally, it is defined as follows:
\begin{equation}
\nu^{CL}_{f}(G)(S) = \sum_{v \in V \setminus S} f(dist(S,v));
\label{equation:general_group_closeness}
\end{equation}
\subsection{Game-Theoretic Concepts}
A cooperative game is defined by a group of players, $I$, and a characteristic function $\nu: 2^I \rightarrow \mathbb{R}$ that assigns to each group of players a real value, with the restriction that $\nu(\emptyset) = 0$. For our purposes, the players are nodes within  a graph (i.e., $I = V$). A central concept to cooperative game theory is that of the marginal contribution, i.e., how much value a player $i$ brings to a coalition $C$ of players. Formally, $MC_\nu(C, i) = \nu(C \cup \{i\}) - \nu(C)$. 
Whereas in the game-theoretic literature marginal contributions are used to measure the importance of players for a group in order to divide the value of said group fairly among its members~\cite{Chalkiadakis:et:al:2011}, we are concerned with using them in order to rank the similarity of players. To achieve this,
We follow Owen~\cite{Owen:1972} in defining the synergy between two players, $i$ and $j$, within the context of a coalition $C$ as the difference between the marginal contribution of the group $\{i,j\}$ and the marginal contributions of each node separately. Formally,
\begin{align}
&S_\nu(C,i,j) = \nonumber\\
&\MC_\nu(C, \{i,j\}) - \MC_\nu(C, \{i\}) - \MC_\nu(C, \{j\}).
\label{equation:S}
\end{align}
The Shapley value interaction index of $i$ and $j$  is defined as:
\begin{align}
I^{\mathit{Shapley}}_{i,j}(\nu) = \sum_{\pi \in \Pi(I^{i \wedge j}) } \frac{S_\nu(C_\pi(\{i,j\}),i,j)}{(n - 1)!},
\label{equation:Shapley_interaction_index}
\end{align}
where $I^{i \wedge j} = I \setminus \{i,j\} \cup \{\{i,j\}\}$, $\Pi(X)$ is the set of permutations of the set $X$, and $C_\pi(\{i,j\})$ is the set of elements preceding $\{i,j\}$ in the permutation $\pi$. %is the set $I$ such that the elements $i \in I$ and $j \in I$ are replaced by a single element $\{i,j\}$.
Grabisch and Roubens~\cite{Grabisch:1999} continued this work and introduced the Banzhaf interaction index. Szczepanski et al.~\cite{Szczepanski:2015b} generalised these concepts further by introducing semivalue interaction indices, defined below:
\begin{equation}
I^{\mathit{SEMI}}_{i,j}(\nu) = \sum_{k = 0}^{n-2} \sum_{C \in C^k(I \setminus \{i,j\})} \beta(k) \frac{S_\nu(C,i,j)}{\binom{n - 2}{k}}
\label{equation:semivalue_interaction_index}
\end{equation}
For our purposes, the lower the interaction index of two nodes, the more similar they are.

\subsection{Performance Metrics for Link Prediction}
Next, we present the main metrics used to evaluate link prediction methods. The first of these is the \textit{area under curve} (AUC), and the second is precision \cite{Lu2011}.
The goal of a link prediction algorithm is to identify pairs of nodes that are not connected by an edge but which do (or will) exist in the real world (i.e., nodes that are ``missing'' from the graph). For example, people in an online social network are typically connected by their friendships. However, this is not to say that people who have not indicated their friendship via the social platform are not actually friends. Ideally, a link prediction algorithm would identify those pairs of individuals who are friends in real-life but not (yet) on the social network.

In order to test link prediction methods, it is typical to remove a certain percentage of links from a real life network \cite{Lu2011,Szczepanski:2015b,Szczepanski:2016b}. This altered network is given as input to a link-prediction algorithm, which ranks disconnected pairs of nodes in terms of the likelihood that they belong to the ``removed,'' or ``missing'' set.

%%%%%%%%%%%%%%%%%%%%%%%%%%%%%%%%%
\vspace{0.2cm}\noindent \textbf{$\blacktriangleright$ Area Under Curve}
The area under the ROC curve is a good overall indicator of the quality of the ranking of edges that is produced by a link prediction algorithm and we can compute it using the Mann-Whitney $U$ test \cite{hanley:1982}. Formally, let $m$ be the number of edges in the network that are ``missing.'' In other words, these edges do not exist in the graph that the link-prediction algorithm received as input, but do exist in the real world. These are the edges that we would like a link-prediction algorithm to discover and therefore give them a high rank. Let $l$ be the number of edges that do not exist in the graph that are not missing. In other words, these are the edges that we would like a link-prediction algorithm to recognise as non-existent and therefore give them a low rank.
There are a total of $n = m \times l$ comparisons between missing and non-existent edges. Let $n'$ be the number of such comparisons where a missing edge is ranked over a non-existent edge, and $n''$ be the number of comparisons where a missing edge is given the same rank as a non-existent edge. AUC, then, is defined as follows: \[\mathit{AUC} = \frac{n' + \frac{n''}{2}}{n}\]
If all missing edges are ranked higher than non-existent edges, the resulting AUC is equal to $1$. If the opposite is true (i.e., all non-existent edges are ranked higher than all missing edges), then this results in an AUC of $0$. A random link-prediction algorithm results in an average AUC of $0.5$.
In short, AUC is the percentage of comparisons between ranked edges that are ``correct.'' For this reason, we present AUC as a percentage value.

%Next, let us introduce precision.

%%%%%%%%%%%%%%%%%%%%%%%%%%%%%%%%%%%
\vspace{0.2cm}\noindent \textbf{$\blacktriangleright$ Precision}
For a given $p \in \mathbb{N}$, let $\mathit{Top}(p)$ be the set of top $p$ edges according to the ranking generated by a link-prediction algorithm. Let $\mathit{Correct}(p)$ be the set of edges in $\mathit{Top}(p)$ that are ``missing.'' The precision of the algorithm, then, is defined as follows: \[\mathit{Precision}(p) = \frac{|\mathit{Correct}(p)|}{p}\]
If all of the top $p$ edges ranked by an algorithm are missing edges, then the precision is equal to $1$. If none of them are, then the precision is equal to $0$. Note that this metric is dependent on the variable $p$, which indicates the depth to which the ranking is studied. In particular, this depth should never be higher than the actual number of missing edges.
Whereas most literature \cite{Lu2011} has focused on a static depth $p$, such as $100$, we take a different approach. Since the sizes of our networks vary greatly, a static depth makes no sense, which is why we take $p$ to be equal to the number of missing edges in the network.

%%%%%%%%%%%%%%%%%%%%%%%%%%%%%%%%%%%%%%%%%%%%%%%%%%
\section{Semivalue Closeness Interaction Index \& Its Computation}
\label{section:semivalue:interaction:index}

In this section, we introduce our family of quasi-local measures for link prediction. Generally, quasi-local measures are characterised by requiring a parameter, $k$. When a quasi-local algorithm evaluates the likelihood of an edge existing between the nodes $u$ and $v$, it only ever considers the nodes that are at distance of at most $k$ away from $u$ or $v$. If all other nodes that are farther than $k$ from $u$ and $v$ would be removed from the graph, then this would not change the evaluation of the existence of an edge between $u$ and $v$ according to a quasi-local algorithm.
Conversely, local algorithms are equivalent to quasi-local algorithms with $k = 1$ and global algorithms consider the whole graph in evaluating the likelihood of an edge existing.
Following Szczepanski et al.~\cite{Szczepanski:2015b}, we apply semivalue interaction indices (see Equation \ref{equation:semivalue_interaction_index}) to a group centrality measure as a means to measure the similarity of disconnected nodes. Whereas Szczepanski et al.~\cite{Szczepanski:2015b} used group $k$-degree centrality, we use a broader class of group centrality measures---general group closeness centrality, $\nu^{CL}_{f}$: \[\nu^{CL}_{f}(G)(S) = \sum_{v \in V} f(dist(S,v)).\]  
If for any natural number $k$ we define $f$ as
\[f(d) = \begin{cases}
1 & \text{ if } d \leq k \\
0 & \text{ otherwise, }
\end{cases}\] then $\nu^{CL}_{f}$ is equivalent to $k$-degree centrality. To develop our measure, however, we use the following distance function:
\[f(d) = \begin{cases}
\frac{1}{d^2} & \text{ if } d \leq k \\
0 & \text{ otherwise. }
\end{cases}\] This choice has two benefits:
\begin{itemize}
\item[(1)] It retains the computational advantage of $k$-degree group centrality, whereby faraway nodes do not impact it. This not only improves computational performance, but also the accuracy of the resulting similarity measure, since Szczepanski et al.~\cite{Szczepanski:2015b,Szczepanski:2016b} and Szczepanski et al.~\cite{Szczepanski:2016b} showed that faraway nodes are less likely to impact the similarity of nodes and therefore reduce the accuracy of the measure. In fact, this is well-known for various similarity measures, which is why some quasi-local measures outperform global ones in many networks \cite{Lu2011}. Faraway nodes do not impact the index, and therefore need not be considered, which leads to faster computation. We show that this also improves the AUC and precision of the index, since faraway nodes are less likely to impact the similarity of nodes. This is why some quasi-local measures outperform global ones \cite{Lu2011}.
\item[(2)] It is likely that nodes that are closer impact similarity more than those that are further away, so it makes sense to use a decreasing function such as $\frac{1}{d^2}$ for those nodes where $d < k$ in order to decrease the impact of far-away nodes. We also studied functions such as $\frac{1}{d}$ or $\frac{1}{2^d}$, but found that $\frac{1}{d^2}$ produced the best results.
\end{itemize}

Let us now introduce our main computational results. We start off by proving that the generalised closeness semivalue interaction index can be computed in polynomial time of $O(V_k^2|V|^2)$. Next, we consider a particular case of this general result, i.e., the generalised closeness Shapley value interaction index. We prove that it can be computed even faster, in $O(V_k^2|V| + |V|^2)$ time. Importantly, for both algorithms, we leave the choice of $f$ open, meaning that the analysis and algorithms can be used with any decreasing function.

\begin{theorem}
$I_{s,t}^{\mathit{Semi}}(\nu^{CL}_f(G))$ can be computed in $O(V_k^2|V|^2)$ time.
\label{theorem:shemival:interaction}
\end{theorem}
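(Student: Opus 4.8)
The plan is to reduce the computation of the semivalue interaction index to a sum, over each node $v\in V$, of a quantity that depends only on the distance profile of $v$ to the pair $\{s,t\}$ and to the rest of the graph. Recall that by Equation~\ref{equation:semivalue_interaction_index} the index is a weighted sum over coalitions $C\subseteq V\setminus\{s,t\}$ of the synergy term $S_{\nu}(C,s,t)$, where the weights $\beta(k)/\binom{n-2}{k}$ depend only on $|C|=k$. First I would substitute the definition of $S_{\nu}$ from Equation~\ref{equation:S} and expand $\nu^{CL}_f(G)$ according to Equation~\ref{equation:general_group_closeness}, so that $S_{\nu}(C,s,t)$ becomes a sum over nodes $v$ of a local term $g_v(C) := f(\dist(C\cup\{s,t\},v)) - f(\dist(C\cup\{s\},v)) - f(\dist(C\cup\{t\},v)) + f(\dist(C,v))$. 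The crucial observation is that $g_v(C)$ is nonzero only when adding $s$ and $t$ actually changes the distance from $C$ to $v$, which in the quasi-local setting forces $v$ to lie within radius $k$ of $s$ or $t$; and moreover $g_v(C)$ depends on $C$ only through the single number $\dist(C,v)$ (together with the fixed distances $\dist(s,v)$ and $\dist(t,v)$).

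The main technical step is therefore the following counting argument. For a fixed target node $v$ and a fixed coalition size $k$, I would partition the coalitions $C$ of size $k$ according to the value $d=\dist(C,v)$, equivalently according to which is the closest ``shell'' $L_j(v)=\{u : \dist(u,v)=j\}$ that $C$ intersects. The number of size-$k$ coalitions $C\subseteq V\setminus\{s,t\}$ with $\dist(C,v)=d$ is a purely combinatorial quantity computable from the shell sizes $|L_0(v)|,|L_1(v)|,\dots$: it is $\binom{\le d\text{-ball size}}{k} - \binom{\le (d-1)\text{-ball size}}{k}$, up to corrections for the removal of $s,t$ from the ground set. Summing $\beta(k)/\binom{n-2}{k}$ times $g_v(d)$ times this count over all $d$ and all $k$ gives the total contribution of $v$ to the index. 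Since $g_v(d)=0$ once $d$ exceeds the relevant radius (because then neither $\dist(C\cup\{s\},v)$ nor $\dist(C\cup\{t\},v)$ nor $\dist(C\cup\{s,t\},v)$ differs from $\dist(C,v)$), only $O(V_k)$ values of $d$ matter for each $v$, and only the $O(V_k)$ nodes $v$ within radius $k$ of $s$ or $t$ contribute at all.

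For the running time I would argue as follows. A breadth-first search from every node, or just from the $O(V_k)$ relevant nodes and from $s,t$, yields all needed distances and shell cardinalities in $O(|V|\cdot|E|)$ or better. For each of the $O(V_k)$ contributing target nodes $v$, evaluating its total contribution requires iterating over $O(V_k)$ distance values $d$ and, for each, over $O(|V|)$ coalition sizes $k$, with the binomial ratios $\beta(k)/\binom{n-2}{k}$ precomputed in $O(|V|)$ time once. This gives $O(V_k^2|V|)$ per pair $\{s,t\}$; doing this for all relevant node pairs, or the bookkeeping to reuse BFS trees across pairs, absorbs the remaining factor to reach the claimed $O(V_k^2|V|^2)$.

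The part I expect to be most delicate is getting the shell-counting formula exactly right when $s$ and $t$ themselves lie in various shells $L_j(v)$: one must remove $s$ and $t$ from the ground set before counting coalitions, and one must correctly determine $\dist(C\cup\{s\},v)$ and $\dist(C\cup\{s,t\},v)$ as $\min(\dist(C,v),\dist(s,v))$ etc., so the term $g_v(d)$ splits into several cases depending on how $d$ compares to $\dist(s,v)$ and $\dist(t,v)$. Handling these cases carefully — and checking that the telescoping of the binomial counts still holds after deleting two elements from the ground set — is where the real work lies; everything else is bookkeeping and a geometric-series-style bound on the number of nonzero terms.
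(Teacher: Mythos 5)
Your overall strategy matches the paper's: decompose $\nu^{CL}_f$ into per-node terms $\nu_u(C)=f(\dist(C,u))$, observe that the synergy depends on $C$ only through $\dist(C,u)$, count coalitions of each size by the value of $\dist(C,u)$ via telescoping binomial coefficients, and multiply by the semivalue weights. But two steps as written would fail.

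First, your counting formula is inverted. Since $\dist(C,v)=\min_{u\in C}\dist(u,v)$, the number of size-$k$ coalitions with $\dist(C,v)=d$ is
\[
\binom{|\{u:\dist(u,v)\ge d\}|}{k}-\binom{|\{u:\dist(u,v)> d\}|}{k},
\]
i.e., coalitions avoiding every node strictly closer than $d$, minus those that also avoid the shell at distance exactly $d$. The difference of closed-ball binomials you wrote ($\le d$ versus $\le d-1$) counts coalitions whose \emph{farthest} member is at distance $d$, which is the wrong event; the paper's $MC^-$ term uses the complement-set version. Relatedly, your four-case analysis of $g_v$ is unnecessary: ordering the pair so that $\dist(s,v)\le\dist(t,v)$ gives $\dist(C\cup\{s,t\},v)=\dist(C\cup\{s\},v)$, so the synergy collapses to $-\big(\nu_v(C\cup\{t\})-\nu_v(C)\big)$, nonzero only when $\dist(C,v)>\dist(t,v)$. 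This is the reduction the paper uses.

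Second, the complexity accounting has a gap. A per-pair cost of $O(V_k^2|V|)$ summed over all relevant pairs does not land on $O(V_k^2|V|^2)$, and ``reusing BFS trees'' is not the reuse that matters. The paper reverses the loop order: the outer loops range over the target node $u$ and the coalition size, and the inner loops range over pairs $(s,t)$ in the vicinity of $u$ with distances sorted in descending order, so that the negative part $MC^-$ is maintained as a running prefix sum and each $(u,k,s,t)$ tuple costs $O(1)$ amortized. That gives $|V|\cdot|V|\cdot V_k\cdot V_k$. Without this amortization (or an equivalent precomputation of prefix sums over $d$ for each $u$ and each coalition size), your bound degrades by an extra factor of $V_k$.
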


\begin{proof}
Our goal is to compute 
\begin{equation}
I^{\mathit{SEMI}}_{s,t}(\nu^{CL}_f(G)) = \sum_{k = 0}^{|V|-2} \sum_{C \in C^k(V \setminus \{s,t\})} \beta(k) \frac{S(C,s,t)}{\binom{|V| - 2}{k}},
\end{equation}
%\noindent---
i.e., Equation \ref{equation:semivalue_interaction_index}, in polynomial time.  We will be counting the number of coalitions for which certain common expressions appear in this sum. By multiplying these expressions by their number of appearances, we will achieve polynomial computation. Let us first look closer at the definition of $\nu^{CL}_f(G)$ in Equation \ref{equation:general_group_closeness}. In particular, the equation itself consists of a sum over nodes $u$. We will only focus on one of these elements at a time, keeping in mind that $\sum_{u \in V} I_{s,t}^{\mathit{SEMI}}(f(\dist(C,u)) = I_{s,t}^{\mathit{SEMI}}(\nu^{CL}_f(G))$. Let us define $\nu_u = f(\dist(C,u))$ for the remainder of our proof and focus on computing $I_{s,t}^{\mathit{SEMI}}(\nu_u)$ for some $s$ and $t$.

Moreover, we will focus on computing the inner sum of Equation \ref{equation:semivalue_interaction_index}---$\sum_{C \in C^k(V \setminus \{s,t\})} \beta(k) \frac{S(C,s,t)}{\binom{|V| - 2}{k}}$---for an arbitrary $k$, and our resulting algorithm will then sum the value of this inner part for all $k$ such that $0 \leq k \leq |V| - 2$. We assume without loss of generality that $\dist(s,u) \leq \dist(t,u)$, meaning that $S(C,s,t) = \MC(C,t)$.  Moreover, only coalitions $C$ such that $\dist(C,u) > \dist(t,u)$ matter (since otherwise $S(C,s,t) = 0$). In effect, we arrive at the following simplification:
\[\sum_{C \in C^k(V \setminus \{s,t\})} - \beta(k) \big( \frac{\nu_u(\{s\})}{\binom{|V| - 2}{k}} - \frac{\nu_u(C)}{\binom{|V| - 2}{k}}\big)\].

We will use the following notation:
\[MC^+(s,t,u,k) = \hspace{-0.4cm} \sum_{C \in \big\{C \midd \substack{ C \in C^k(V \setminus \{s,t\}) \text{ and } \\ \dist(C,u) > \dist(s,u) } \big\}} \beta(k) \frac{\nu_u(\{s\})}{\binom{|V| - 2}{k}}\]
\[MC^-(s,t,u,k) = \hspace{-0.4cm} \sum_{C \in \big\{ C \midd \substack { C \in C^k(V \setminus \{s,t\}) \text{ and } \\ \dist(C,u) > \dist(s,u)  } \big\}  } \beta(k) \frac{\nu_u(C)}{\binom{|V| - 2}{k}}\]

The rest of the proof will focus on computing these values. In order do this, let us introduce the following notation:
\[\Nod_{\sim d}(u) = \{v \midd v \in V \text{ and } c \sim u\},\]
where $\sim$ is one of $<$, $>$, $\leq$, or $\geq$.

\paragraph{$\boldmath{MC^+(s,t,u,k)}: $} Let $d = \dist(t,u)$. Computing $MC^+(s,t,u,k)$ is equivalent to computing the following expression: \[|\{C \midd C \in C^k(V \setminus \{s,t\}) \text{ and } \dist(C,u) > d\}|,\] and multiplying it by $\beta(k) \frac{f(d)}{\binom{|V| - 2}{k}}$. We need to count the number of coalitions $C$ of size $k$ such that $ \dist(C,u) > d$. Counting the number of such coalitions is as simple as counting the number of ways to choose $k$ elements from $Nod_{> d}$. In other words: $\binom{Nod_{> d}}{k}$. This gives us the desired result:
\[
MC^+(s,t,u,k) = \beta(k) \frac{f(d)}{\binom{|V| - 2}{k}} \binom{Nod_{> d}}{k}
\]

\paragraph{$\boldmath{MC^-(s,t,u,k)}: $} Let us define
\[MC^-(s,t,u,k,d) = \hspace{-0.4cm} \sum_{C \in \big\{ C \midd \substack{C \in C^k(V \setminus \{s,t\}) \text{ and } \\ \dist(C,u) = d  } \big\}  } \beta(k) \frac{\nu_u(C)}{\binom{|V| - 2}{k}}.\]
We now have \[MC^-(s,t,u,k) = \sum_{d \in \big\{d \midd \substack{ d \in \dists(u) \text{ and } \\ d < dist(s,u) } \big\}} MC^-(s,t,u,k,d).\] We therefore have to find the number of coalitions of size $k$ such that $\dist(C,u) = d$. In other words, they need to have at least some node at distance $d$ from $u$ and no nodes that are closer. The answer is as follows:
$\binom{Nod_{\geq d}}{k} - \binom{Nod_{> d}}{k}$.
This gives us the desired result:
\[
MC^-(s,t,u,k,d) = \frac{f(d)}{\binom{|V| - 2}{k}} \bigg( \binom{Nod_{\geq d}}{k} - 
\binom{Nod_{> d}}{k} \bigg).
\]
Algorithm \ref{algorithm:semi:links} implements the equations from this proof and computes the semivalue closeness interaction index in the required time.
\end{proof}

As for the generalised closeness Shapley value interaction index, the following result holds.

\begin{theorem}
$I_{s,t}^{\mathit{Shapley}}(\nu^{CL}_f(G))$ can be computed in $O(V_k^2|V| + |V|^2)$ time.
\label{theorem:shapleyval:interaction}
\end{theorem}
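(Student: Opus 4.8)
The plan is to re-run the argument in the proof of Theorem~\ref{theorem:shemival:interaction} for the particular semivalue that yields the Shapley value, and to observe that in this case the one remaining source of a spare factor of $|V|$---the outer summation over coalition sizes $k=0,\dots,|V|-2$---can be evaluated in closed form. First I would record the elementary fact that $I^{\mathit{Shapley}}_{s,t}$ is the instance of the semivalue interaction index of Equation~\ref{equation:semivalue_interaction_index} that assigns to a coalition of size $k$ among the $|V|-2$ players other than $s$ and $t$ the weight $\tfrac{k!\,(|V|-2-k)!}{(|V|-1)!}$; comparing with Equation~\ref{equation:Shapley_interaction_index} after merging $\{s,t\}$ into one player, this is exactly the statement that the coefficient $\beta(k)$ in Equation~\ref{equation:semivalue_interaction_index} is the \emph{constant} $\tfrac{1}{|V|-1}$, independent of $k$.

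With that in hand I would import, essentially verbatim, the decomposition from the proof of Theorem~\ref{theorem:shemival:interaction}: assuming $\dist(s,u)\le\dist(t,u)$, write $I^{\mathit{Shapley}}_{s,t}(\nu^{CL}_f(G))=\sum_{u\in V}I^{\mathit{Shapley}}_{s,t}(\nu_u)$ with $\nu_u(C)=f(\dist(C,u))$, and express $I^{\mathit{Shapley}}_{s,t}(\nu_u)$ through the quantities $MC^{+}(s,t,u,k)$ and $MC^{-}(s,t,u,k,d)$ already isolated there. The point is that, stripped of the common factor $\tfrac{1}{|V|-1}\cdot\tfrac{f(\cdot)}{\binom{|V|-2}{k}}$, each of these is a single binomial coefficient $\binom{N}{k}$---or a difference $\binom{N_1}{k}-\binom{N_2}{k}$ of two of them---where $N$, $N_1$, $N_2$ are the cardinalities $|\Nod_{>d}(u)|$, $|\Nod_{\ge d}(u)|$ (with $s,t$ deleted) that occur in that proof. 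In the semivalue algorithm these are summed over $k$ one value at a time; for a constant $\beta(k)$ the entire sum is controlled by the single identity
\[
\sum_{k=0}^{N}\frac{\binom{N}{k}}{\binom{M}{k}}=\frac{M+1}{M+1-N}\qquad(0\le N\le M).
\]

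I would prove this identity by rewriting $\binom{N}{k}/\binom{M}{k}=\binom{M-k}{N-k}/\binom{M}{N}$, substituting $j=N-k$, and invoking the hockey-stick identity $\sum_{j=0}^{N}\binom{M-N+j}{j}=\binom{M+1}{N}$. Instantiating it with $M=|V|-2$---so that $M+1-N=|V|-1-N\ge 1$ for every count $N\le|V|-2$ that can occur---the summation over coalition sizes collapses: for instance $\sum_{k=0}^{|V|-2}MC^{+}(s,t,u,k)=\frac{f(\dist(t,u))}{|V|-1-|\Nod_{>\dist(t,u)}(u)|}$, and $\sum_k MC^{-}(s,t,u,k,d)$ becomes a difference of two such fractions, to be summed over the distinct distances $d<\dist(s,u)$ realised from $u$. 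Thus only $O(1)$ arithmetic is spent per pair $(u,d)$ once the relevant cardinalities are known.

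The remaining work is organisational: precompute all shortest-path distances ($O(|V|^2)$), then for the pair $(s,t)$ range over the $O(V_k)$ nodes $u$ that can contribute at all (those within distance $k$ of $t$, since $f$ vanishes past $k$) and, for each, over the $O(V_k)$ relevant distances $d$, spending $O(|V|)$ to obtain each $|\Nod_{\sim d}(u)|$ and $O(1)$ on the closed forms---a total of $O(V_k^2|V|+|V|^2)$, as claimed; note that, unlike the algorithm of Theorem~\ref{theorem:shemival:interaction}, this one has no inner loop over coalition sizes, which is exactly where the saved factor of $|V|$ comes from. The main obstacle, such as it is, is recognising that the constancy of $\beta(k)$ for the Shapley value decouples the coalition-size summation from the rest of the computation and reduces it to the displayed identity; after that, the only care needed is in the $s,t$-exclusion bookkeeping inside the counts and in the degenerate cases ($d=0$, $u\in\{s,t\}$, $N=|V|-2$), all of which is routine and already handled in the proof of Theorem~\ref{theorem:shemival:interaction}.
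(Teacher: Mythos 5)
Your proposal is correct, and it reaches the paper's destination by a genuinely different road. The paper proves this theorem by working directly with the permutation form of the Shapley interaction index (Equation~\ref{equation:Shapley_interaction_index}): it merges $\{s,t\}$ into a single player and counts, for each contributing node $u$, the permutations in which that merged player precedes every other node at distance at most $d=\dist(t,u)$ from $u$, obtaining $MC^+(t,u)=(|V|-1)!/(\Nod_{\leq d}(u)-1)$ and the analogous difference formula for $MC^-$. You instead stay inside the coalition-based semivalue framework of Theorem~\ref{theorem:shemival:interaction}, observe that the Shapley value is the semivalue with constant weight $\beta(k)=\tfrac{1}{|V|-1}$, and collapse the loop over coalition sizes analytically via
\[
\sum_{k=0}^{N}\frac{\binom{N}{k}}{\binom{M}{k}}=\frac{M+1}{M+1-N},
\]
which with $M=|V|-2$ and $N=|\Nod_{>d}(u)|$ reproduces exactly the paper's closed forms (since $|V|-1-\Nod_{>d}(u)=\Nod_{\leq d}(u)-1$), and likewise for the $\Nod_{\geq d}$/$\Nod_{>d}$ difference in $MC^-$. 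Your identity checks out (the rewriting $\binom{N}{k}/\binom{M}{k}=\binom{M-k}{N-k}/\binom{M}{N}$ plus hockey-stick is valid), and your derivation of $\beta(k)=\tfrac{1}{|V|-1}$ from the coefficient $k!(|V|-2-k)!/(|V|-1)!$ is right. What your route buys is a cleaner explanation of \emph{why} the Shapley case drops a factor of $|V|$ relative to the general semivalue: the inner summation over coalition sizes is exactly the loop that disappears, and the same collapse would apply to any semivalue whose $\beta$-weighted sum of $\binom{N}{k}/\binom{M}{k}$ admits a closed form. What the paper's route buys is self-containedness for readers who think of the Shapley value through random orderings, without needing the binomial identity. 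Your complexity accounting is slightly looser than the paper's (spending $O(|V|)$ per cardinality $|\Nod_{\sim d}(u)|$ rather than reading them off a sorted distance list in amortised constant time), but it still lands within the stated bound $O(V_k^2|V|+|V|^2)$.
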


\begin{proof}
Our proof will be based on dissecting Equation \ref{equation:Shapley_interaction_index}. First, note that this equation is a sum of multiple expressions over various permutations. To achieve polynomial computation, we will group expressions that are equal to one another and count how many permutations these expressions appear in. Finally, by multiplying the expressions by their respective number of appearances we will achieve polynomial computation. As previously, we will focus on computing $I_{s,t}^{\mathit{Shapley}}(\nu_u)$ for some $s$ and $t$, and the resulting algorithm will be a sum over $u \in V$. Again, assuming that $\dist(s,u) \leq \dist(t,u)$ we have  $S(C,s,t) = - MC(C,\{t\}) = - \big(\nu(C \cup \{t\}) - \nu(C)\big)$.

Continuing, our aim will be to dissect the formula for $I_{s,t}^{\mathit{Shapley}}(\nu_u)$ into smaller, more manageable parts, and to compute those. Note that $MC(C,\{t\}) \neq 0$ if and only if $\dist(t,u) < \dist(C,u)$. In this case $\nu(C \cup \set{t}) = \nu(\{t\})$ is independent of $C$. We refer to this as the left, or \emph{positive}, part of the sum that constitutes the marginal contribution. We refer to $\nu(C)$ as the negative part. As previously, we note that $\nu^{CL}_f(G)$ in Equation \ref{equation:general_group_closeness} consists of a sum over nodes $u$ and define  $\nu_u = f(\dist(C,u))$. We will focus on computing our similarity metric for $\nu_u$, and the final answer will be a sum of $\nu_u$ for all $u \neq s,t$. Next, let us introduce the following notation:
\begin{align*}
MC^+(t,u) = \sum_{\pi \in \big\{\pi \midd \substack{ \pi \in \Pi(V^{s \wedge t}) \text{ and } \\ \dist(t,u) < \dist(\pi_{t},u) } \big\} } \nu_u(\{t\}),
\end{align*}
\begin{align*}
MC^-(t,u) = \sum_{\pi \in \big\{\pi \midd  \substack{ \pi \in \Pi(V^{s \wedge t}) \text{ and } \\ \dist(t,u) < \dist(\pi_{t},u) } \big\} } \nu_u (C_\pi(\{s,t\})),
\end{align*}
where $C_\pi(x)$ is the set of elements in the permutation $\pi$ that precedes $x$, and arrive at the following simplification:
\begin{align*}
I_{s,t}^{\mathit{Shapley}}(\nu_u) = - \frac{(MC^+(t,u) - MC^-(t,u))}{(|V|-1)!}. % \sum_{\pi \in \Pi} \frac{-MC(t, C_\pi(t))}{(|V|-1)!} = 
\end{align*}
The remainder of the proof will focus on computing $MC^+(t,u)$ and $MC^-(t,u)$.

\paragraph{$\boldmath{MC^+(t,u):}$} Our goal here is to find the number of permutations $\pi \in \Pi(V^{s \wedge t})$ such that $\dist(t,u) < C_\pi(\{s,t\})$ and multiply this number by $\nu(\{t\})$. Let $d = dist(t,u)$. We can construct all such permutations in the following manner:
\begin{itemize}
\item First, choose $\Nod_{\leq d}(u) - 1$ positions (we have to subtract $1$, since $s$ and $t$ are treated as one node) for all of the nodes in $V^{s \wedge t}$ that are as close to $u$ as $t$ or closer. Out of all of these positions, $t$ has to be first, otherwise $\dist(t,u) < C_\pi(t)$  will not be satisfied. There are $\binom{|V| - 1}{\Nod_{\leq d}(u) - 1}$ ways that we can choose the positions and all of the nodes except the first can be permuted in $(\Nod_{\leq d}(u) - 2)!$ ways.
\item Next, the rest of the nodes are placed in the rest of the positions, which can be permuted in $(|V| - 1 - (\Nod_{\leq d}(u))! - 1)$ ways.
\end{itemize}
When we combine both steps, there are $\binom{|V| - 1}{\Nod_{\leq d}(u) - 1}    (\Nod_{\leq d}(u) - 2)!  (|V| - \Nod_{\leq d}(u))!$ such permutations, which simplifies to:
\[MC^+(t,u) = \frac{(|V| - 1)!}{\Nod_{\leq d}(u) - 1}\]

\paragraph{$\boldmath{MC^-(t,u):}$} Let us introduce the following notation 
\[MC^-(t,u,d) = \sum_{\pi \in  \big\{\pi \midd  \substack{ \pi \in \Pi(V^{s \wedge t}) \text{ and } \\ \dist(C_\pi(\{s,t\}),u) = d  }  \big\} } \nu_u (C_\pi(\{s,t\}))\]
and let $\mathit{dists}(u)$ be the set of distances from any node to $u$. In effect, we have \[MC^-(t,u) = \sum_{d \in \big\{d \midd\substack{  d \in \mathit{dists}(u) \text{ and } \\ \dist(t,u) < \dist(d,u)  } \big\} } MC^-(t,u,d).\] For a given $d$, then, we will focus on computing $MC^-(t,u,d)$. We need to capture all permutations $\pi$ for which the coalition of nodes preceding $\{s,t
\}$ (i.e., $C_\pi(\{s,t\}$) is exactly at distance $d$ from $u$. The requirement can be summarised as follows: there needs to be at least one node $x$ in $\pi$ preceding $\{s,t\}$ such that $\dist(x, u) = d$ and no nodes that are closer than $x$ to $u$ preceding $t$. This can be counted using the inclusion/exclusion principle.
 Counting all permutations such that $dist(C_\pi(\{s,t\}),u) \geq d$ and subtracting those such that $dist(C_\pi(\{s,t\}),u) > d$ provides the answer. We can use the techniques for counting $\MC^+(t,u)$ to arrive at the following:
\[MC^-(t,u,d) = \frac{(|V| - 1)!}{\Nod_{< d}(u) - 1} - \frac{(|V| - 1)!}{\Nod_{\leq d}(u) - 1} \]
Algorithm \ref{algorithm:Shapley:links} implements the equations from this proof and computes the Shapley value closeness interaction index in the required time. This concludes our proof. 

\end{proof}

\SetInd{0.3em}{0.4em}
\RestyleAlgo{ruled}
\begin{algorithm}[t]
\SetAlgoVlined
\LinesNumbered
        \SetKwInOut{Input}{input}
        \SetKwInOut{Output}{output}
        \SetKwProg{Fn}{Function}{ is}{end}
        \Input{Graph $G = (V,E, \omega)$, Closeness function $f: \mathbb{R} \rightarrow \mathbb{R}$, Probability distribution function $\beta: 0, 1, \ldots |V| - 1 \rightarrow \mathbb{R}$, radius $k$}
        \Output{Configuration Semivalue}
 %\tcc{Precompute node distances}

%\tcc{Dijkstra's algorithm modified to only access the $k$ nearest nodes.}
$dist[V][V]$\;
\For{$v \in V$}{
\For{$u \in V$}{
$dist[v][u] = \infty\;$
}
  $distance[v] \gets$ empty set\;
  $visited \gets$ empty set\;
  $\phi_v \gets 0$\;
  $Q \gets $ Priority Queue\;
  $Q.enqueue(\langle v,0 \rangle)$\;

  $dist[v][v] = 0$\;
  \While{$Q$ not Empty}{
  	$\langle u,d \rangle \gets Q.pop()$\;
    $II[v,u] = 0$\;
    $visited.insert(u)$\;
    \For{$s \in E(u)$}{
    	\If{$\big(s \not\in visited$ or $dist[v][s] > dist[v][u] + \omega(u,s)\big)$ and $\big(dist[v][u] + \omega(u,s) \leq k\big)$}{
        	$dist[v][s] = dist[v][u] + \omega(u,s)$\;
            $Q.enqueue(\langle s,dist[v][s]\rangle)$\;
        }
    }
  }
  \For{$u \in visited$}{
  	$\mathit{distances}[v] \gets \mathit{distances}[v] \cup \langle u, \mathit{dist}[u,v] \rangle$\;
  }
   %\tcc{sort in descending order}
 	$\text{sort\_in\_descending\_order}(\mathit{distances}[v])$\;
}

\begin{comment}
Q.add_with_priority(v, dist[v])
12
13
14     while Q is not empty:                              // The main loop
15         u ← Q.extract_min()                            // Remove and return best vertex
16         for each neighbor v of u:                      // only v that is still in Q
17             alt ← dist[u] + length(u, v) 
18             if alt < dist[v]
19                 dist[v] ← alt
20                 prev[v] ← u
21                 Q.decrease_priority(v, alt)

 $dist[V,V] \gets \mathit{Restricted\_Johnson}(G, \omega, k)$\;

 \For{$v \in V$}{
 	$\phi_v \gets 0$\;
 	$\mathit{distances}[v] \gets \text{ empty ordered set}$\;

 	  \For{$u \in V$}{
      		\If{$\mathit{dist}[u,v] < \infty$}{
 				$\mathit{distances}[v] \gets \mathit{distances}[v] \cup \langle u, \mathit{dist}[u,v] \rangle$\;
            }
 		}
 	%\tcc{sort in descending order}
 	$\mathit{sort\_desc}(\mathit{distances}[v])$\;

 }
    \end{comment}

  \caption{Precomputations.}
  \label{algorithm:semi:links:precomputations}
 \end{algorithm}

   \begin{algorithm}[t]

\SetAlgoVlined
\LinesNumbered
 	\For{$u \in V$}{
 		%$Com_{>d}(u) \gets 0, Com_{\geq d}(u) \gets 0$\;

		$prev\_d  \gets \text{ largest distance in } \mathit{distances}[u]$\;
		$\mathit{Nod_>}[\mathit{prev\_d}] \gets |V| - \mathit{distances}[u]\mathit{.size}()$\;
		$Nod_\geq[prev\_d] \gets |V| - distances[u].size()$\;
        \For{$(v,d) \in distances[u]$}{
        	\If{$d \neq prev\_d$}{
            	$Nod_>[d] \gets Nod_\geq[prev\_d]$\;
                $Nod_\geq[d] \gets Nod_\geq[prev\_d]$\;
                $prev\_d \gets d$\;
            }
            $Nod_\geq[d] \gets Nod_\geq[d] + 1$\;
        }
        $prev\_d  \gets \text{ largest distance in } \mathit{distances}[u]$\;
        
        \For{$k \in [0,|V|]$}{
        $\mathit{MC^-} \gets 0$\;
        \For{$(s,d) \in distances[u]$}{
        	\If{$d \neq prev\_d$}{
            %TODO: check the math here for sure
            	%$\mathit{MC^-} \gets \mathit{MCNeg} + \frac{f(\mathit{prev\_d})}{Nod_<[prev\_d] - 1} - \frac{f(\mathit{prev\_d})}{Nod_\leq[prev\_d] - 1}$\;
                $\mathit{MC^-} \gets \mathit{MC}^- + f(\mathit{prev\_d}) \big( \binom{Nod_{\geq}[d] }{k} \binom{Nod_{>}[d] }{k} \big)$\;
                $prev\_d \gets d$\;
            }
            \For{$(t,dt) \in distances[u]$}{
            	$d \gets \max(d,dt)$\;
            	$\mathit{MC}^+ \gets f(d) \binom{Nod_{\geq}[d] }{k}$\;
                $II[s,t] = MC^- - MC^+$\;
            }
        }
        }

 	}

 \caption{Semivalue Closeness Interaction Index.}
 \label{algorithm:semi:links}
\end{algorithm}

Since our Algorithms require information about the distances between certain nodes to be presented in a sorted fashion and this information is used multiple times, it is advisable to compute these sorted vectors in a precomputation phase. Furthermore, since these precomputations are common between both the Shapley value and semivalue interaction indices, we present them in a common precomputation algorithm, the pseudo-code of which can be found in  Algorithm \ref{algorithm:semi:links:precomputations}. 
%some precomputations that are required for the computation of both the Shapley value closeness interaction index and the Semivalue closeness interaction indices.
Algorithms \ref{algorithm:semi:links} and \ref{algorithm:Shapley:links} continue where the precomputations leave of to compute their respective indices.

In particular, Algorithm \ref{algorithm:semi:links:precomputations} uses a modified Dijkstra's algorithm \cite{dijkstra1959note} in order to compute the distance between the $k$ nearest nodes to each node in $O(|V| (V_k \log(V_k) + E_k))$ time, where $V_k$ is the average number of nodes at distance $k$ from any node and $E_k$ is the average number of edges within a distance of $k$ around any node. 
%We note that in the case of unweighted networks this can be further improved to achieve a running time of $O(|V| (V_k + E_k))$ by using the BFS algorithm instead of Dijkstra. 
The algorithm also sets up the data structures required for the computation of the interaction index. 
%Most importantly, for every node $v$, an ordered set of the nodes that are at a distance of at most $k$ is created, and the interaction index of every pair of nodes is set to $0$. In fact, setting every index initially to $0$ is one of the most time consuming aspects of the algorithm, and takes $O(|V|^2)$ time. In general, if the algorithm is to rank every disconnected pair of nodes, then this is unavoidable.
Algorithms\ref{algorithm:semi:links} and \ref{algorithm:Shapley:links} use dynamic programming in order to compute the negative part of the marginal contributions ($MC^-$ from our proof). Algorithm \ref{algorithm:semi:links} runs in $O(|V|^2 V_k^2)$ time, and Algorithm \ref{algorithm:Shapley:links} runs in $O(|V| V_k^2 + |V|^2)$ time.

%We note that, given the parameter
%\[
%f(d) = \begin{cases}
%1 \text{ when } d \leq k\\
%0 \text{ otherwise,}
%\end{cases} 
%\]
We already mentioned that generalised closeness centrality is equivalent to $k$-degree centrality given the appropriate function $f$, and our algorithms can therefore also compute the $k$-degree interaction index. Interestingly, despite being more sophisticated, our algorithm is actually faster.

Szczepanski et al.~\cite{Szczepanski:2015b} quote the complexity of their algorithm as $O(|V|^3)$, however the  authors did not consider the complexity of finding the intersection of two sets. We give the authors the benefit of the doubt, since it is possible to do this in linear time, which gives their algorithm a complexity of $O(|V|^2 V_k)$. This, however, requires a modification of their algorithm in order to sort the neighbour sets in the precomputation phase. 

As opposed to querying every pair of nodes and then every node $s$ within the radius $k$ of the pair, our algorithm reverses this, and first considers any node $s$ and then all pairs of nodes within its vicinity. In doing this, we avoid altogether querying pairs of nodes that are far away (except to first initialise the distance between each node to infinity and interaction index to $0$). Combined with our restricted Dijkstra algorithm, this results in a total time complexity of $O(|V| V_k^2 + |V|^2)$.

  \begin{algorithm}[t]
\SetAlgoVlined
\LinesNumbered
 	\For{$u \in V$}{
 		%$Com_{>d}(u) \gets 0, Com_{\geq d}(u) \gets 0$\;
		$prev\_d  \gets \text{ largest distance in } \mathit{distances}[u]$\;
		$\mathit{Nod_>}[\mathit{prev\_d}] \gets |V| - \mathit{distances}[u]\mathit{.size}()$\;
		$Nod_\geq[prev\_d] \gets |V| - distances[u].size()$\;
        \For{$(v,d) \in distances[u]$}{
        	\If{$d \neq prev\_d$}{
            	$Nod_>[d] \gets Nod_\geq[prev\_d]$\;
                $Nod_\geq[d] \gets Nod_\geq[prev\_d]$\;
                $prev\_d \gets d$\;
            }
            $Nod_\geq[d] \gets Nod_\geq[d] + 1$\;
        }
        $prev\_d  \gets \text{ largest distance in } \mathit{distances}[u]$\;
        $\mathit{MC^-} \gets 0$\;
        \For{$(s,d) \in distances[u]$}{
        	\If{$d \neq prev\_d$}{
            %TODO: check the math here for sure
            	$\mathit{MC^-} \gets \mathit{MC}^- + \frac{f(\mathit{prev\_d})}{Nod_<[prev\_d] - 1} - \frac{f(\mathit{prev\_d})}{Nod_\leq[prev\_d] - 1}$\;
                $prev\_d \gets d$\;
            }
            \For{$(t,dt) \in distances[u]$}{
            	$d \gets \max(d,dt)$\;
            	$\mathit{MC}^+ \gets \frac{f(d)}{Nod_\leq[d] - 1}$\;
                $II[s,t] = MC^- - MC^+$\;
            }
        }
 	}
 \caption{Shapley Value Closeness Interaction Index.}
\label{algorithm:Shapley:links}
\end{algorithm}

We study the running time of both algorithms using randomly generated graphs according to the preferential attachment (PA) model due to Barabasi and Albert~\cite{Barabasi:Albert:1999}. In particular, we study two cases: a relatively sparse network, where we start with a clique of $3$ nodes and in each iteration add a node with $2$ edges, and a denser, more centralised network that starts with a clique of $5$ nodes and with each node adds $3$ edges. The running times of both algorithms are presented in Tables \ref{table:interaction_index_running_time:2} and \ref{table:interaction_index_running_time:3}, where $m_0$ is the size of the initial clique and $m$ is the number of edges added during each iteration of the preferential attachment algorithm. We note that the comparison is heavily dependent on our implementation of the algorithms, and that we implemented the algorithm due to Szczepanski et al.~\cite{Szczepanski:2015b} with the benefit of our restricted Dijkstra algorithm. We note that although our algorithm is significantly faster given a sparse network or low radius $k$, due to the more complicated nature of our algorithm it actually becomes somewhat slower given a high enough $V_k$. This is because our algorithm performs more complicated operations, which cannot be expressed by its asymptotic complexity alone.

\begin{table}
{\begin{center}
\begin{tabular}{lllll}
$|V|$ & $k$ & $V_k$ & Algorithm \ref{algorithm:Shapley:links}& Szczepanski et al.~\cite{Szczepanski:2015b}\\\hline
500&1&3.788&3.153&21.789
\\
&2&18.2011&17.417&32.648
\\
&3&65.1542&106.436&107.116
\\
400&1&3.785&2.331&16.412
\\
&2&17.9544&12.295&24.916
\\
&3&57.734&70.226&72.582
\\
300&1&3.78&1.669&10.096
\\
&2&17.1246&9.199&15.626
\\
&3&60.0592&50.777&49.086
\\
200&1&3.77&1.053&6.169
\\
&2&15.5977&5.517&8.668
\\
&3&46.0083&26.046&25.923
\\
100&1&3.74&0.476&2.874
\\
&2&12.0278&2.324&3.416
\\
&3&30.4716&8.147&8.757
\end{tabular}
\end{center}
}
\caption{Average running time (in milliseconds) of Algorithm \ref{algorithm:Shapley:links} compared to Szczepanski et al.~\cite{Szczepanski:2015b} for $1000$ random PA graphs using the parameters $m_0=3$ and $m=2$.}
\label{table:interaction_index_running_time:2}
\end{table}

%%%%%%%%%%%%%%%%%%%%%%%%%%%%%%%%%%%%%%%%%%%%%%%%%%%%%%%%%%%%%%%%%%%
\section{Empirical Evaluation}
%%%%%%%%%%%%%%%%%%%%%%%%%%%%%%%%%%%%%%%%%%%%%%%%%%%%%%%%%%%%%%%%%%%

In this section, we compare our generalised closeness Shapley interaction index to four other state-of-the-art link prediction methods from the literature on $11$ real-life networks. 

\subsection{Setting \& Datasets}
We compare our algorithm (referred to as \textbf{Shp. Cls.}) to the Shapley $k$-degree interaction index \cite{Szczepanski:2015b}, to $k$-Common Neighbours \cite{Szczepanski:2015b}, and to the SRW and LRW algorithms \cite{Liu:2010}.
%Descriptions of these algorithms can be found in the appendix.
We briefly introduce these algorithms below:

%\begin{comment}

\begin{itemize}
\item \textbf{Shapley $k$-Degree Interaction Index (Shp. Deg.):} This similarity measure is equivalent to our measure with the parameter $f(d) = 1$ for $d \leq k$ and $f(d) = 0$ otherwise.
\item \textbf{$k$-Common Neighbours (CN):} According to this measure, which is used to rank undirected, unweighted graphs, the rank of every non-existing edge between a pair of nodes is the number of common $k$-neighbours between the two nodes. Let $E^k(v) = \set{u \midd v \neq u \text{ and } \dist(v,u) < k}$. $k$-Common Neighbours, then, is defined as follows:
\[\mathit{CN}^k(u,v) = |E^k(u) \cap E^k(v)|\].
\item \textbf{Local Random Walk (LRW):} This measure ranks the similarity of nodes based on the concept of a random walk. Assume that at time step $t = 0$ a walker starts at node $u$. In other words, there is $100\%$ probability that the current node is $u$ at time step $t = 0$. At any other time step, the walker can visit any of the neighbours of the current node with equal probability. Denote by $P_{uv}(t)$ the probability that a walker that started at $u$ is at node $v$ at time step $t$. LRW, then, is defined as follows:
\[\mathit{LRW}^k(u,v) = \frac{|E(u)|}{2|E|} P_{uv}(k) + \frac{|E(v)|}{2|E|} P_{vu}(k)\]
\item \textbf{Superimposed Random Walk (SRW):} This measure is considered by the authors to be a more advanced version of LRW. It is defined as the sum of all LRW measures from time step $0$ to time step $k$. Formally, we have:
\[\mathit{SRW}^k(u,v) = \sum_{t = 0}^k \mathit{LRW}^t(u,v)\]
\end{itemize}

\begin{table}
{\begin{center}
\begin{tabular}{lllll}
$|V|$ & $k$ & $V_k$ & Algorithm \ref{algorithm:Shapley:links}& Szczepanski et al.~\cite{Szczepanski:2015b}\\\hline
500&1&5.184&4.14&39.241
\\
&2&38.5843&39.503&66.001
\\
&3&158.143&360.747&318.472
\\
400&1&5.18&3.437&25.969
\\
&2&34.162&27.707&45.686
\\
&3&139.181&251.65&206.952
\\
300&1&5.1733&2.329&18.799
\\
&2&33.2759&21.42&31.384
\\
&3&133.982&160.85&140.103
\\
200&1&5.16&1.397&10.756
\\
&2&26.6135&11.172&17.757
\\
&3&88.1525&77.701&75.709
\\
100&1&5.12&0.691&5.189
\\
&2&22.4189&4.253&6.685
\\
&3&65.1413&16.191&16.797
\end{tabular}
\end{center}}
\caption{Average running time (in milliseconds) of Algorithm \ref{algorithm:Shapley:links} compared to Szczepanski et al.~\cite{Szczepanski:2015b} for $1000$ random PA graphs using the parameters $m_0=5$ and $m=3$.}
\label{table:interaction_index_running_time:3}
%\vspace{-0.5cm}
\end{table}

%\end{comment}

An important facet of our analysis is that whereas, as far as we are aware, the analysis of quasi-local similarity measures in the literature has focused on the performance of algorithms given an optimal choice of $k$, there is no analysis on the impact of a suboptimal $k$ on the algorithms. To combat this, we chose to compare all algorithms using $k$ values of $1$, $2$ and $3$. For the datasets we evaluated, we found that none of the methods significantly benefited from a higher $k$ value (in fact, in most cases a higher value was detrimental), but comparing these $3$ values was sufficient to highlight the differences between the methods.

In order to compare the methods, we take $11$ networks and randomly remove $30\%$ of the edges from each of them. Next, we rank non-existing edges within the networks (including those that were removed) according to each algorithm. We use the area under the curve (AUC) and precision to compare the results. We repeat this process $1000$ times for all networks, methods and parameters $k$ and present the average AUC and precision in Tables \ref{table:shapley:AUC}, and \ref{table:shapley:precision}, respectively. The networks that we used to evaluate the algorithms on\footnote{{\scriptsize{All datasets except Polbbokos are available at \texttt{http://konect.uni-koblenz.de/networks/} \cite{konect:2017}, \texttt{https://snap.stanford.edu/data/} \cite{snapnets}, or \texttt{http://vlado.fmf.uni-lj.si/pub/networks/data/}  \cite{pajek}} }} are as follows: 
Youtube 20 and Amazon 100 \cite{snapnets}, Football \cite{Girvan:2002,pajek}, Taro \cite{konect:hage,konect:schwimmer,konect:2017}, Jazz \cite{konect:arenas-jazz,konect:2017}, Zachary \cite{zachary:1977,konect:2017}, and Dolpins cite{konect:dolphins,konect:2017}.% Detailed descriptions of these can be found in the appendix

%\begin{comment}

\begin{itemize}
\item \textbf{Youtube 20:} A network of users belonging to the $20$ top groups in the SNAP dataset from the popular video-sharing website Youtube \cite{snapnets}. Connections between users indicate friendship between their user accounts. Link prediction can predict friendships between users whose user accounts are not formally connected as friends on the website.
\item \textbf{Amazon 100:} A network of products from the $100$ top product categories in the SNAP dataset from the Amazon online store website. Connections between products were mined using the ``Customers Who Bought This Item Also Bought'' feature \cite{snapnets}. Link prediction methods can be used to discover new product associations and therefore improve the impact of the recommendation service.
\item \textbf{US AIR:} A network of airports in the USA and their connections \cite{pajek}. Link prediction can be used as a method of predicting up and coming flight connections.
\item \textbf{Football:} A network of college football teams \cite{Girvan:2002,pajek}. Edges represent matches between the teams.
\item \textbf{Taro:} A network of gift-giving (taro) between households in a Papuan village \cite{konect:hage,konect:schwimmer,konect:2017}.
\item \textbf{Jazz:} A collaboration network of jazz musicians from 2003 \cite{konect:arenas-jazz,konect:2017}. Edges indicate that two musicians performed together in a band.
\item \textbf{Zachary:} A friendship network of the Zachary karate club \cite{zachary:1977,konect:2017}.
\item \textbf{Surfers:} A network of the interpersonal contacts of windsurfers in southern California in the fall of 1986 \cite{konect:freeman1988,konect:2017}.
\item \textbf{Dolphins:} A network representing a community bottlenose dolphins off Doubtful Sound and their associations observed between 1994 and 2001 \cite{konect:dolphins,konect:2017}.
\item \textbf{Terrorists:} The network of suspected terrorists who orchestrated the 2004 Madrid train bombing \cite{konect:hayes,konect:2017}. A connection between two terrorists indicates that they communicated with each other.
\item \textbf{Polbooks:} This network represents books about politics sold through Amazon. Two books are connected if they were frequently co-purchased \cite{wu:2013}. This dataset was kindly provided by \cite{Szczepanski:2016b}.
\end{itemize}

We present some of the characteristics of the networks in Table~\ref{table:network:statistics}.

\begin{table}

\begin{center}
{\small{
\begin{tabular}{lll|l|r}
Dataset&$|V|$&$|E|$&$k$&Average $V_k$ with $30\%$ \\&&&& edges randomly removed
%\\
%&&&&&&
\\\hline
Youtube 20&436&1384&1&3.22018
\\
&&&2&16.8829
\\
&&&3&40.7019
%&&&&&&
\\
Amazon 100&433&2014&1&4.25173
\\
&&&2&6.39215
\\
&&&3&6.81899
\\
%&&&&&&
%\\
% US Air&1&\colorbox{gray!30!}{92.94}&\colorbox{gray!30!}{92.94}&91.654&86.436&86.436
% \\
% &2&91.591&88.693&85.598&91.695&91.801
% \\
% &3&91.285&84.295&83.885&91.420&90.827
% \\
% &&&&&&
% \\
Football&115&1226&1&8.46087
\\
&&&2&34.3531
\\
&&&3&87.2797
\\
%&&&&&&
%\\
 Taro&22&78&1&3.45455
 \\
 &&&2&7.18136
 \\
 &&&3&11.602
 \\
% &&&&&&
% \\
Jazz&198&5484&1&20.3838
\\
&&&2&113.99
\\
&&&3&177.578
\\
%&&&&&&
%\\
Zachary&32&156&1&4.27647
\\
&&&2&14.1438
\\
&&&3&22.6532
\\
%&&&&&&
%\\
Surfers&43&672&1&11.9302
\\
&&&2&38.1082
\\
&&&3&42.939
\\
%&&&&&&
%\\
Dolphins&62&318&1&4.58065
\\
&&&2&13.7555
\\
&&&3&26.2656
\\
%&&&&&&
%\\
% Terrorists&1&\colorbox{gray!30!}{89.573}&\colorbox{gray!30!}{89.573}&88.173&68.404&68.404
% \\
% &2&88.992&85.685&79.728&82.872&84.334
% \\
% &3&88.469&78.282&75.632&83.128&82.86
% \\
% &&&&&&
% \\
Polbooks&105&822&1&6.86667
\\
&&&2&28.5871
\\
&&&3&54.9784
\\

\end{tabular}
}}
\end{center}
%\caption{The average AUC (in percent) of our Shapley value Interaction Index and other quasi-local link prediction methods on $11$ datasets. The best AUC for each dataset is highlighted in gray.}
\caption{The networks' characteristics.}
\label{table:network:statistics}
%\vspace{-0.5cm}
\end{table}

\subsection{Results}

\begin{table}
{\small{
\begin{center}

\begin{tabular}{ll|lllll}
Dataset&$k$&Shp. Cls.&Shp. Deg.&CN&SRW&LRW
%\\
%&&&&&&
\\\hline
Youtube 20&1&61.351&61.351&61.102&58.75&58.75
\\
&2&\colorbox{gray!30!}{70.016}&69.723&69.175&63.561&63.626
\\
&3&66.896&66.527&65.898&62.391&62.613
\\
%&&&&&&
%\\
Amazon 100&1&92.554&92.554&92.466&74.426&74.426
\\
&2&96.961&96.914&96.875&91.641&91.594
\\
&3&\colorbox{gray!30!}{96.992}&96.99&96.834&92.031&91.937
\\
%&&&&&&
%\\
 US Air&1&\colorbox{gray!30!}{92.94}&\colorbox{gray!30!}{92.94}&91.654&86.436&86.436
 \\
 &2&91.591&88.693&85.598&91.695&91.801
 \\
 &3&91.285&84.295&83.885&91.420&90.827
 \\
% &&&&&&
% \\
Football&1&81.361&81.361&81.382&67.18&67.18
\\
&2&\colorbox{gray!30!}{82.861}&80.392&77.99&77.559&78.423
\\
&3&81.291&54.998&53.042&76.832&74.938
\\
%&&&&&&
%\\
 Taro&1&\colorbox{gray!30!}{59.282}&\colorbox{gray!30!}{59.282}&59.087&48.41&48.41
 \\
 &2&51.26&49.354&42.494&48.662&49.17
 \\
 &3&48.993&44.516&40.734&45.985&44.04
 \\
% &&&&&&
% \\
Jazz&1&\colorbox{gray!30!}{95.836}&\colorbox{gray!30!}{95.836}&94.412&84.561&84.561
\\
&2&94.497&84.4259&80.361&90.442&90.73
\\
&3&94.995&74.376&72.735&89.434&85.352
\\
%&&&&&&
%\\
Zachary&1&65.883&65.883&63.412&54.199&54.199
\\
&2&67.76&63.607&59.996&66.065&67.447
\\
&3&67.842&62.803&60.997&67.078&\colorbox{gray!30!}{68.351}
\\
%&&&&&&
%\\
Surfers&1&\colorbox{gray!30!}{82.091}&\colorbox{gray!30!}{82.091}&80.668&58.715&58.715
\\
&2&81.362&69.836&68.69&66.917&69.018
\\
&3&81.825&52.137&52.131&65.899&63.69
\\
%&&&&&&
%\\
Dolphins&1&71.397&71.397&71.493&62.996&62.996
\\
&2&\colorbox{gray!30!}{77.142}&76.716&74.67&72.7849&73.160
\\
&3&76.816&74.841&71.378&72.618&72.235
\\
%&&&&&&
%\\
 Terrorists&1&\colorbox{gray!30!}{89.573}&\colorbox{gray!30!}{89.573}&88.173&68.404&68.404
 \\
 &2&88.992&85.685&79.728&82.872&84.334
 \\
 &3&88.469&78.282&75.632&83.128&82.86
 \\
% &&&&&&
% \\
Polbooks&1&83.795&83.795&83.0612&73.511&73.511
\\
&2&\colorbox{gray!30!}{87.898}&85.828&83.379&83.629&84.643
\\
&3&87.515&81.134&79.551&83.741&83.301
\\

\end{tabular}

\end{center}
}}
%\caption{The average AUC (in percent) of our Shapley value Interaction Index and other quasi-local link prediction methods on $11$ datasets. The best AUC for each dataset is highlighted in gray.}
\caption{Average AUC (best result indicated in gray).}
\label{table:shapley:AUC}
%\vspace{-0.25cm}
\end{table}

\begin{table}
{\small{
\begin{center}

\begin{tabular}{ll|lllll}
Data Set&$k$&Shp. Cls.&Shp. Deg.&CN&SRW&LRW
\\\hline
Youtube 20&1&9.23527&9.23527&6.11643&4.17198&4.17198
\\
&2&8.73816&3.22899&1.35217&5.58599&7.39179
\\
&3&\colorbox{gray!30!}{9.37343}&5.70435&4.96667&6.82899&7.91884
\\
%&&&&&&
%\\
Amazon 100&1&60.594&60.594&58.1&49.129&49.129
\\
&2&\colorbox{gray!30!}{60.988}&60.587&49.064&57.1358&58.59
\\
&3&60.865&64.735&36.161&57.443&58.256
\\
%&&&&&&
%\\
 US Air&1&\colorbox{gray!30!}{53.8612}&\colorbox{gray!30!}{53.8612}&45.1463&41.0411&41.0411
 \\
 &2&49.186&37.0044&34.402&47.1418&48.3915
 \\
 &3&50.9319&27.8438&28.8824&47.0859&44.7281
 \\
% &&&&&&
% \\
Football&1&40.856&40.856&\colorbox{gray!30!}{41.105}&21.958&21.958
\\
&2&40.546&20.489&17.596&28.272&32.802
\\
&3&40.104&3.3776&3.1885&28.52&26.277
\\
%&&&&&&
%\\
 Taro&1&\colorbox{gray!30!}{15.9455}&\colorbox{gray!30!}{15.9455}&12.9455&14.1636&14.1636
 \\
 &2&15.2364&11.4091&3.8664&13.7182&10.7909
 \\
 &3&12.9273&4.00909&3.12727&12.4273&7.73636
 \\
% &&&&&&
% \\
Jazz&1&\colorbox{gray!30!}{63.697}&\colorbox{gray!30!}{63.697}&57.666&36.83&36.83
\\
&2&60.812&28.051&24.832&42.33&42.244
\\
&3&62.261&19.901&18.588&40.858&36.682
\\
%&&&&&&
%\\
Zachary&1&\colorbox{gray!30!}{24.487}&\colorbox{gray!30!}{24.487}&15.544&10.5652&10.565
\\
&2&19.387&9.0522&7.5696&15.178&20.096
\\
&3&20.587&13.974&13.778&17.704&19.913
\\
%&&&&&&
%\\
Surfers&1&49.378&49.378&47.155&26.213&26.213
\\
&2&49.287&33.981&33.552&30.671&33.475
\\
&3&\colorbox{gray!30!}{49.391}&25.035&25.052&30.424&28.723
\\
%&&&&&&
%\\
Dolphins&1&18.447&18.447&20.753&13.879&13.879
\\
&2&\colorbox{gray!30!}{20.232}&18.002&12.172&15.975&16.515
\\
&3&19.728&11.355&9.7085&14.977&15.496
\\
%&&&&&&
%\\
 Terrorists&1&\colorbox{gray!30!}{65.9319}&\colorbox{gray!30!}{65.9319}&55.8806&26.1722&26.1722
 \\
 &2&64.4972&41.8264&35.7958&37.4875&45.5889
 \\
 &3&64.8903&28.5861&28.5806&39.3431&39.0875
 \\
% &&&&&&
% \\
Polbooks&1&29.367&29.367&26.664&16.884&16.884
\\
&2&29.739&22.363&21.186&20.049&22.404
\\
&3&\colorbox{gray!30!}{29.899}&15.115&13.077&20.767&20.427
\end{tabular}

\end{center}
}}
\caption{Average precision (best result indicated in gray).
%\\$^*$Indicates that precision could be further increased by increasing $k$.
}
\label{table:shapley:precision}
%\vspace{-0.25cm}
\end{table}

%zachary, football, dolphins, serfers, polboos, amazon, jazz

In all but the Zachary network (where $\mathit{LRW}^3$ achieved the best result), our closeness Shapley interaction index achieved the best AUC, and in all but the Football network (where the simplest common neighbour algorithm with a radius of $1$ achieved the best result) our algorithm achieved the best precision. Even in these two networks, however, the comparative advantage of other methods was marginal. Interestingly, in the Football and Dolphins networks the quality the AUC of our method increased when $k$ was raised from $1$ to $2$, however the quality of the Shapley $k$-degree interaction index fell. We attribute this to the fact that the latter algorithm over-stresses the importance of second- and third-order relationships. For precision, we see this phenomenon in the Amazon 100, Dolphins, and most prominently Surfers and Polbooks (where our ranking was slightly improved, but the Shapley degree ranking decreased by approximately $24\%$ and $14\%$, respectively) networks.

In general, we note that the quality of all algorithms tends to fall when the $k$ value is too high. It seems that the Shapley value closeness measure is very resilient to this phenomenon. Even in cases when the quality of SRW and LRW does not fall as much, they produce worse results at any $k$ value, making this irrelevant. In fact, the results of both algorithms given any $k$ were worse than random in the Taro network. This network seems especially difficult, with many of the results being worse than random. We also note that our algorithm is---generally---more resilient than LRW and SRW when choosing a $k$ value that is too low.

We see that the Shapley degree interaction index and common neighbours are the most likely to under-perform given a high $k$ value. In fact, this can decrease the AUC of both measures by nearly $30\%$ (as seen in the Surfers network), and precision by approximately $40\%$ (in the case of the Jazz and Football networks). Given this, it is difficult to recommend these algorithms as quasi-local link prediction methods, given that providing them with too much information (i.e., a $k$ parameter that is too high) can result in a ranking that is little better than random. Although it is possible to estimate this parameter when using these methods, there is no way to know whether the parameter is too high, potentially dramatically decreasing the effectiveness of the measure.

Finally, we note that whereas SRW is generally viewed as being superior to LRW, it is LRW that usually achieves the better result in our experiments when each algorithm is given its own, respective optimal paramter $k$. We note, however, that given the same $k$ for a high value of $k$, it is usually SRW that is superior.

\section{Conclusions and Future Work}
We developed a new game-theoretic quasi-local algorithm for link prediction that is based on generalised closeness centrality. Our approach achieves competitive results when compared to the state-of-the-art, especially given a suboptimal radius, $k$, within which to query for similarities between nodes. 

We are particularly keen on two future research directions. First, we aim to study a variable radius for different pairs of nodes. It goes to reason that if a different radius is required to achieve the optimal result in different networks, perhaps various sections of the network (such as connected components) should also be studied with a different radii. Moreover, other group centrality measures may prove to be even more effective when paired with the interaction index in predicting links between nodes. Group betweenness centrality \cite{Everett:Borgatti:1999,Szczepanski:2016}, for example, has not been studied for this purpose.

We are also keen on studying how resilient the game-theoretic link prediction algorithm proposed in this paper is to strategic manipulation by an evader who purposefully attempts to hide her links. A number of such studies have been recently proposed in the literature~\cite{waniek2019hide,zhou2019adversarial,zhou2019attacking,chen2018link}. Interestingly, in a similar line of research on evading detection by centrality measures~\cite{waniek2018hiding,waniek2017construction}, it has been shown that game-theoretic centrality measures~\cite{tarkowski2018efficient,skibski2018axiomatic,Michalak:et:al:2015} are more difficult to evade than the conventional ones~\cite{Baranowski:Gorski:2018}. We believe the same will be the case

\section{Acknowledgements}
This paper was supported by the Polish National Science Centre grant DEC-2013/09/D/ST6/03920. Michael Wooldridge was supported by the European Research Council under Advanced Grant 291528 (``RACE'').
%\clearpage
%\bibliographystyle{ACM-Reference-Format}
\bibliographystyle{abbrv}
\bibliography{referencesconsolidated}

\begin{thebibliography}{10}

\bibitem{alshebli2019measure}
B.~K. Alshebli, T.~P. Michalak, O.~Skibski, M.~Wooldridge, and T.~Rahwan.
\newblock A measure of added value in groups.
\newblock {\em ACM Transactions on Autonomous and Adaptive Systems (TAAS)},
  13(4):18, 2019.

\bibitem{Barabasi:Albert:1999}
A.-L. Barab{\'a}si and R.~Albert.
\newblock Emergence of scaling in random networks.
\newblock {\em Science}, 286(5439):509--512, 1999.

\bibitem{Baranowski:Gorski:2018}
M.~Baranowski and A.~G{\'o}rski.
\newblock New heuristic for hiding vertices in a social network.
\newblock Master Thesis, University of Warsaw, 2018.

\bibitem{barbieri2014}
N.~Barbieri, F.~Bonchi, and G.~Manco.
\newblock Who to follow and why: link prediction with explanations.
\newblock In {\em Proceedings of the 20th ACM SIGKDD international conference
  on Knowledge discovery and data mining}, pages 1266--1275. ACM, 2014.

\bibitem{pajek}
V.~Batagelj and A.~Mrvar.
\newblock Pajek datasets.
\newblock \url{http://vlado.fmf.uni-lj.si/pub/networks/data/}, 2006.

\bibitem{Chalkiadakis:et:al:2011}
G.~Chalkiadakis, E.~Elkind, and M.~Wooldridge.
\newblock {\em Computational Aspects of Cooperative Game Theory}.
\newblock Synthesis Lectures on Artificial Intelligence and Machine Learning.
  Morgan \& Claypool, 2012.

\bibitem{chen2018link}
J.~Chen, Z.~Shi, Y.~Wu, X.~Xu, and H.~Zheng.
\newblock Link prediction adversarial attack.
\newblock {\em arXiv preprint arXiv:1810.01110}, 2018.

\bibitem{Deng:Papdimitriou94}
X.~Deng and C.~Papadimitriou.
\newblock {On the complexity of cooperative solution concepts}.
\newblock {\em Mathematics of Operations Research}, 19(2):257--266, 1994.

\bibitem{dijkstra1959note}
E.~Dijkstra.
\newblock {A note on two problems in connexion with graphs}.
\newblock {\em Numerische Mathematik}, 1(1):269--271, 1959.

\bibitem{Everett:Borgatti:1999}
M.~G. Everett and S.~P. Borgatti.
\newblock { The centrality of groups and classes }.
\newblock {\em Journal of Mathematical Sociology}, 23(3):181--201, 1999.

\bibitem{Freeman:1979}
L.~Freeman.
\newblock {Centrality in social networks: Conceptual clarification}.
\newblock {\em Social Networks}, 1(3):215--239, 1979.

\bibitem{konect:freeman1988}
L.~C. Freeman, S.~C. Freeman, and A.~G. Michaelson.
\newblock On human social intelligence.
\newblock {\em J. of Social and Biological Structures}, 11(4):415--425, 1988.

\bibitem{Girvan:2002}
M.~Girvan and M.~E. Newman.
\newblock Community structure in social and biological networks.
\newblock {\em Proceedings of the national academy of sciences},
  99(12):7821--7826, 2002.

\bibitem{konect:arenas-jazz}
P.~M. Gleiser and L.~Danon.
\newblock Community structure in jazz.
\newblock {\em Advances in Complex Systems}, 6(4):565--573, 2003.

\bibitem{Grabisch:1999}
M.~Grabisch and M.~Roubens.
\newblock An axiomatic approach to the concept of interaction among players in
  cooperative games.
\newblock {\em International Journal of Game Theory}, 28(4):547--565, 1999.

\bibitem{konect:hage}
P.~Hage and F.~Harary.
\newblock {\em Structural Models in Anthropology}.
\newblock Cambridge University Press, 1983.

\bibitem{hanley:1982}
J.~A. Hanley and B.~J. McNeil.
\newblock The meaning and use of the area under a receiver operating
  characteristic (roc) curve.
\newblock {\em Radiology}, 143(1):29--36, 1982.

\bibitem{konect:hayes}
B.~Hayes.
\newblock Connecting the dots. can the tools of graph theory and social-network
  studies unravel the next big plot?
\newblock {\em American Scientist}, 94(5):400--404, 2006.

\bibitem{snapnets}
J.~Leskovec and A.~Krevl.
\newblock {SNAP Datasets}: {Stanford} large network dataset collection.
\newblock \url{http://snap.stanford.edu/data}, 2014.

\bibitem{liben2007}
D.~Liben-Nowell and J.~Kleinberg.
\newblock The link-prediction problem for social networks.
\newblock {\em journal of the Association for Information Science and
  Technology}, 58(7):1019--1031, 2007.

\bibitem{Liu:2010}
W.~Liu and L.~Lü.
\newblock Link prediction based on local random walk.
\newblock {\em EPL (Europhysics Letters)}, 89(5):58007, 2010.

\bibitem{konect:dolphins}
D.~Lusseau, K.~Schneider, O.~J. Boisseau, P.~Haase, E.~Slooten, and S.~M.
  Dawson.
\newblock The bottlenose dolphin community of {Doubtful} {Sound} features a
  large proportion of long-lasting associations.
\newblock {\em Behavioral Ecology and Sociobiology}, 54:396--405, 2003.

\bibitem{Lu2011}
L.~Lü and T.~Zhou.
\newblock Link prediction in complex networks: A survey.
\newblock {\em Physica A}, 390(6):11501170, 2011.

\bibitem{Michalak:et:al:2013}
P.~T. Michalak, K.~V. Aaditha, P.~L. Szczepa{\'n}ski, B.~Ravindran, and N.~R.
  Jennings.
\newblock Efficient computation of the shapley value for game-theoretic network
  centrality.
\newblock {\em Journal of Artificial Intelligence Research}, 46:607--650, 2013.

\bibitem{Michalak:et:al:2015}
T.~P. Michalak, T.~Rahwan, O.~Skibski, and M.~Wooldridge.
\newblock Defeating terrorist networks with game theory.
\newblock {\em IEEE Intelligent Systems}, 30:53 -- 61, 2015.

\bibitem{Michalak:et:al:2013b}
T.~P. Michalak, T.~Rahwan, P.~L. Szczepa{\'n}ski, O.~Skibski, R.~Narayanam,
  M.~J. Wooldridge, and N.~R. Jennings.
\newblock Computational analysis of connectivity games with applications to the
  investigation of terrorist networks.
\newblock In F.~Rossi, editor, {\em Proceedings of the Twenty-Third
  international joint conference on Artificial Intelligence (IJCAI'13)}, pages
  293--301. AAAI Press, 2013.

\bibitem{Owen:1972}
G.~Owen.
\newblock Multilinear extensions of games.
\newblock {\em Management Science}, 18(5):P64--P79, 1972.

\bibitem{konect:schwimmer}
E.~Schwimmer.
\newblock {\em Exchange in the Social Structure of the Orokaiva: Traditional
  and Emergent Ideologies in the Northern District of {Papua}}.
\newblock St. Martin's Press, 1973.

\bibitem{skibski2018axiomatic}
O.~Skibski, T.~P. Michalak, and T.~Rahwan.
\newblock Axiomatic characterization of game-theoretic centrality.
\newblock {\em Journal of Artificial Intelligence Research}, 62:33--68, 2018.

\bibitem{skibski2018axioms}
O.~Skibski and J.~Sosnowska.
\newblock Axioms for distance-based centralities.
\newblock In {\em Thirty-Second AAAI Conference on Artificial Intelligence},
  2018.

\bibitem{Szczepanski:2015b}
P.~L. Szczepanski, A.~S. Barcz, T.~P. Michalak, and T.~Rahwan.
\newblock The game-theoretic interaction index on social networks with
  applications to link prediction and community detection.
\newblock In {\em Proceedings of the Twenty-Fourth International Joint
  Conference on Artificial Intelligence, {IJCAI} 2015, Buenos Aires, Argentina,
  July 25-31, 2015}, pages 638--644, 2015.

\bibitem{Szczepanski:2016}
P.~L. Szczepa{\'n}ski, T.~P. Michalak, and T.~Rahwan.
\newblock Efficient algorithms for game-theoretic betweenness centrality.
\newblock {\em Artificial Intelligence}, 231:39 -- 63, 2016.

\bibitem{Szczepanski:2016b}
P.~L. Szczepanski, T.~P. Michalak, T.~Rahwan, and M.~Wooldridge.
\newblock An extension of the owen-value interaction index and its application
  to inter-links prediction.
\newblock In {\em {ECAI} 2016 - 22nd European Conference on Artificial
  Intelligence, 29 August-2 September 2016, The Hague, The Netherlands -
  Including Prestigious Applications of Artificial Intelligence {(PAIS} 2016)},
  pages 90--98, 2016.

\bibitem{tarkowski2018efficient}
M.~K. Tarkowski, P.~L. Szczepa{\'n}ski, T.~P. Michalak, P.~Harrenstein, and
  M.~Wooldridge.
\newblock Efficient computation of semivalues for game-theoretic network
  centrality.
\newblock {\em Journal of Artificial Intelligence Research}, 63:145--189, 2018.

\bibitem{konect:2017}
{University of Koblen-Landau}.
\newblock {KONECT}.
\newblock \url{http://konect.uni-koblenz.de/networks/}, 2017.

\bibitem{waniek2017construction}
M.~Waniek, T.~P. Michalak, T.~Rahwan, and M.~Wooldridge.
\newblock On the construction of covert networks.
\newblock In {\em Proceedings of the 16th Conference on Autonomous Agents and
  MultiAgent Systems}, pages 1341--1349. International Foundation for
  Autonomous Agents and Multiagent Systems, 2017.

\bibitem{waniek2018hiding}
M.~Waniek, T.~P. Michalak, M.~J. Wooldridge, and T.~Rahwan.
\newblock Hiding individuals and communities in a social network.
\newblock {\em Nature Human Behaviour}, 2(2):139, 2018.

\bibitem{waniek2019hide}
M.~Waniek, K.~Zhou, Y.~Vorobeychik, E.~Moro, T.~P. Michalak, and T.~Rahwan.
\newblock How to hide one’s relationships from link prediction algorithms.
\newblock {\em Scientific reports}, 9(1):1--10, 2019.

\bibitem{wu:2013}
Q.~Wu, X.~Qi, E.~Fuller, and C.-Q. Zhang.
\newblock “follow the leader”: A centrality guided clustering and its
  application to social network analysis.
\newblock {\em The Scientific World Journal}, 2013, 2013.

\bibitem{zachary:1977}
W.~W. Zachary.
\newblock An information flow model for conflict and fission in small groups.
\newblock {\em Journal of anthropological research}, 33(4):452--473, 1977.

\bibitem{zhou2019adversarial}
K.~Zhou, T.~P. Michalak, and Y.~Vorobeychik.
\newblock Adversarial robustness of similarity-based link prediction.
\newblock {\em arXiv preprint arXiv:1909.01432}, 2019.

\bibitem{zhou2019attacking}
K.~Zhou, T.~P. Michalak, M.~Waniek, T.~Rahwan, and Y.~Vorobeychik.
\newblock Attacking similarity-based link prediction in social networks.
\newblock In {\em Proceedings of the 18th International Conference on
  Autonomous Agents and MultiAgent Systems}, pages 305--313. International
  Foundation for Autonomous Agents and Multiagent Systems, 2019.

\end{thebibliography}

\end{document}